\newcommand{\ill}{\ensuremath{\mathsf{ILL}}\xspace}
\newcommand{\pill}{\ensuremath{\pi\mathsf{ILL}}\xspace}
\newcommand{\cll}{\ensuremath{\mathsf{CLL}}\xspace}
\newcommand{\pcll}{\ensuremath{\pi\mathsf{CLL}}\xspace}
\newcommand{\cp}{\ensuremath{\mathsf{CP}}\xspace}
\newcommand{\pcllcp}{\ensuremath{\pcll^\cp}\xspace}
\newcommand{\ull}{\ensuremath{\mathsf{ULL}}\xspace}
\newcommand{\pull}{\ensuremath{\pi\mathsf{ULL}}\xspace}
\newcommand{\vdill}{\ensuremath{\vdash_\ill}}
\newcommand{\vdcll}{\ensuremath{\vdash_\cll}}
\newcommand{\U}{\ensuremath{\mathfrak{U}}}
\newcommand{\C}{\ensuremath{\mathfrak{C}}}
\newcommand{\I}{\ensuremath{\mathfrak{I}}}
\newcommand{\infrule}[1]{\ensuremath{\text{\textsc{#1}}}}
\newcommand{\lft}{\ensuremath{\text{\textsc{L}}}}
\newcommand{\rgt}{\ensuremath{\text{\textsc{R}}}}
\newcommand{\intrule}{\ensuremath{\prescript{\ast}{}}}
\newcommand{\id}{\ensuremath{\infrule{Id}}}
\newcommand{\cut}{\ensuremath{\infrule{Cut}}}
\newcommand{\cpy}{\ensuremath{\infrule{Copy}}}
\newcommand{\infAx}[2]{\AxiomC{} \RightLabel{(#2)} \UnaryInfC{#1}}
\newcommand{\infAss}[1]{\AxiomC{#1}}
\newcommand{\infUn}[2]{\RightLabel{(#2)} \UnaryInfC{#1}}
\newcommand{\infBin}[2]{\RightLabel{(#2)} \BinaryInfC{#1}}
\let\onu\nu
\renewcommand{\nu}[1]{\ensuremath{\textstyle \mathsmaller{(\onu #1)}}}
\let\oparallel\|
\renewcommand{\|}{\ensuremath{\mathbin{|}}}
\newcommand{\fwd}[2]{\ensuremath{[#1 \mathbin{\leftrightarrow} #2]}}
\newcommand{\0}{\ensuremath{\bm{0}}}
\newcommand{\send}[2]{\ensuremath{#1 \langle #2 \rangle}}
\newcommand{\recv}[2]{\ensuremath{#1(#2)}}
\newcommand{\serv}[2]{\ensuremath{!#1(#2)}}
\newcommand{\subst}[1]{\ensuremath{\{#1\}}}
\newcommand{\tensor}{\ensuremath{\mathbin{\otimes}}\xspace}
\newcommand{\lolli}{\ensuremath{\mathbin{\multimap}}\xspace}
\newcommand{\parr}{\ensuremath{\mathbin{\rotatebox[origin=c]{180}{\&}}}\xspace}
\newcommand{\1}{\ensuremath{\bm{1}}}
\newcommand{\sbot}{\ensuremath{\mathsmaller{\bot}}}
\newcommand{\bang}{\ensuremath{{!}}\xspace}
\newcommand{\whynot}{\ensuremath{{?}}\xspace}
\newcommand{\dual}[1]{\ensuremath{#1^\sbot}}
\newcommand{\fn}{\ensuremath{\text{fn}}}
\newcommand{\bn}{\ensuremath{\text{bn}}}
\newcommand{\red}{\ensuremath{\mathbin{\rightarrow}}}
\newcommand{\bnf}{\ensuremath{\textstyle \mathbin{\mathlarger{\mathlarger{|}}}}}
\newcommand{\rdegree}{$r$-degree\xspace}
\newcommand{\live}{\ensuremath{\text{live}}}
\newcommand{\tend}{\ensuremath{\mathsf{end}}}
\newtheorem{theorem}{Theorem}[section]
\newtheorem{lemma}[theorem]{Lemma}
\newtheorem{definition}{Definition}[section]
\titlespacing*{\paragraph}{0pt}{1.25ex plus 1ex minus .2ex}{1em}
\newcommand{\thx}{\thanks{
    Work partially supported by the Netherlands Organization for Scientific
    Research (NWO) under the VIDI Project No. 016.Vidi.189.046 (Unifying
    Correctness for Communicating Software).
}}
\title{Session Type Systems based on Linear Logic: \\ Classical versus
Intuitionistic\thx}
\author{Bas van den Heuvel
    \qquad\qquad Jorge A. Pérez
\institute{University of Groningen, The Netherlands}
}
\begin{document}

\maketitle

\begin{abstract}
    Session type systems have been given logical foundations via Curry-Howard
    correspondences based on both intuitionistic and classical linear logic.
    The type systems derived from the two logics enforce communication
    correctness on the same class of $\pi$-calculus processes, but they are
    significantly different. Caires, Pfenning, and Toninho informally observed
    that, unlike the classical type system, the intuitionistic type system
    enforces locality for shared channels, i.e.\ received channels cannot be
    used for replicated input. In this paper, we revisit this observation from a
    formal standpoint. We develop United Linear Logic (\ull), a logic
    encompassing both classical and intuitionistic linear logic. Then, following
    the Curry-Howard correspondences for session types, we define \pull, a
    session type system for the $\pi$-calculus based on \ull. Using \pull we can
    formally assess the difference between the intuitionistic and classical type
    systems, and justify the role of locality and symmetry therein.
\end{abstract}

\section{Introduction}

\emph{Session types} are a popular approach to typing message-passing
concurrency~\cite{honda_types_1993,takeuchi_interaction-based_1994,%
honda_language_1998}. They describe communication over channels as sequences of
communication actions. This way, e.g., the session type
$!\mathsf{int}.?\mathsf{bool}.\tend$ types a channel as follows: send an
integer, receive a boolean, and close the channel. Due to its simplicity and
expressiveness, the $\pi$-calculus~\cite{milner_calculus_1992,%
sangiorgi_pi-calculus_2003}---the paradigmatic model of concurrency and
interaction---is a widely used setting for studying session types.

In a line of work developed by Caires, Pfenning, Wadler, and several others, the
theory of session types has been given strong logical foundations. Caires and
Pfenning discovered a Curry-Howard correspondence between a form of session
types for the $\pi$-calculus and Girard's \emph{linear
logic}~\cite{girard_linear_1987}: session types correspond to linear logic
propositions, type inference rules to sequent calculus, and communication to cut
reduction~\cite{caires_session_2010}. The resulting session type system ensures
important correctness properties for communicating processes: protocol fidelity,
communication safety, deadlock freedom, and strong normalization.

As in standard logic, there are two ``schools'' of linear logic:
\emph{classical}~\cite{girard_linear_1987} and
\emph{intuitionistic}~\cite{barber_dual_1996}. The differences between classical
and intuitionistic linear logic are known---see,
e.g.,~\cite{chang_judgmental_2003,laurent_around_2018}. This dichotomy also
appears in the logical foundations of session types: while Caires and Pfenning's
correspondence relies on intuitionistic linear logic~\cite{barber_dual_1996}, Wadler
developed a correspondence based on classical linear
logic~\cite{wadler_propositions_2012}. Superficial differences between the
resulting type systems include the number of typing rules (the intuitionistic
system has roughly twice as many rules as the classical system) and the
shape/meaning of typing judgments (in the intuitionistic system, judgments have
a rely-guarantee reading not present in the classical system). In turn, these
differences follow from the way in which each system internalizes duality: the
classical system provides a more explicit account of duality than the
intuitionistic system.

We are interested in going beyond these superficial differences, so as to
establish a   formal comparison between the two type systems. This seems to
us an indispensable step in consolidating the logical foundations of
message-passing concurrency. To our knowledge, the only available comparison is
informal: Caires, Pfenning, and Toninho~\cite{caires_linear_2016} observed that a
more fundamental difference concerns the \emph{locality principle} for shared
channels. The principle states that received channels cannot be used for further
reception, i.e., only the output capability of channels can be
sent~\cite{merro_locality_2000}. In session-based concurrency, shared channels
define services; clients connect to services by sending a linear channel.
Locality of shared channels therefore means that received channels cannot be
used to provide a service. Well-known from a foundational perspective, locality
has been promoted as a sensible principle for distributed implementations of (object-oriented)
languages based on the $\pi$-calculus~\cite{merro_asynchrony_2004}. The
observation in~\cite{caires_linear_2016} is that Caires and Pfenning's
intuitionistic interpretation of session types enforces locality of shared
channels~\cite{caires_linear_2016}, whereas Wadler's classical
interpretation does not: processes that break locality are well-typed
in~\cite{wadler_propositions_2012}.

The existence of a class of processes that is typable in one system but not in
the other immediately frames the desired formal comparison as an expressiveness
question: the type system in~\cite{wadler_propositions_2012} can be considered
to be \emph{more expressive} than the one in~\cite{caires_session_2010}. To
formally examine this question, the first step is defining a basic framework of
reference in which both type systems can be objectively compared. To this end,
we build upon Girard's Logic of Unity~\cite{girard_unity_1993}, which subsumes
classical, intuitionistic and linear logic in one system. In the same spirit, we
develop \emph{United Linear Logic} (\ull): a logic that subsumes classical and
intuitionistic linear logic. Following the Curry-Howard correspondence by Caires
and Pfenning, we interpret \ull as a session type system for the $\pi$-calculus,
dubbed \pull. The class of \pull-typable processes therefore contains processes
induced by type systems derived from both intuitionistic and classical
interpretations of linear logic. Using \pull, we corroborate and formalize
Caires, Pfenning, and Toninho's observation as inclusions between classes of
typable processes: our technical results are that (i)~\pull precisely captures the
class of  processes typable under the classical interpretation and that (ii)~the
class of processes typable under the intuitionistic interpretation is strictly
included in \pull.

This paper is structured as follows. In Section~\ref{sec:ull} we introduce \ull,
explain the Curry-Howard interpretation as the session type system \pull, and
detail the correctness properties for processes derived by typing.
Section~\ref{sec:comparison} formally establishes the differences between the
classical and intuitionistic interpretations of linear logic as session type
systems. Section~\ref{sec:conclusion} concludes the paper.

\section{United Linear Logic as a Session Type System}\label{sec:ull}

In this section, we introduce United Linear Logic (\ull), a logic based on the
linear fragment of Girard's Logic of Unity~\cite{girard_unity_1993}. We present
\ull as a session type system for the
$\pi$-calculus~\cite{milner_calculus_1992,sangiorgi_pi-calculus_2003}, dubbed
\pull, following the Curry-Howard correspondences established by Caires and
Pfenning~\cite{caires_session_2010} and by
Wadler~\cite{wadler_propositions_2012}.

\paragraph{Propositions / Types.}
Propositions in \ull correspond to session types in \pull; they are defined as
follows:

\begin{definition}\label{def:propositions}
    \ull propositions / \pull types are generated by the following grammar:
    \begin{align*}
        A, B ::= \1 \bnf \bot \bnf A \tensor B \bnf A \parr B \bnf A \lolli B
        \bnf \bang A \bnf \whynot A
    \end{align*}
\end{definition}

Session types represent sequences of communication actions that should be
performed along channels. Table~\ref{tbl:propsassessiontypes} gives the
intuitive reading of the interpretation of propositions as session types. Note
that there are two types for reception: \parr from classical and \lolli from
intuitionistic linear logic.

{
    \rowcolors{1}{black!3}{black!5}
    \begin{table}[h]
        \begin{center}
            \begin{tabular}{ll}
                $\1$ and $\bot$
                & Close the channel \\
                $A \tensor B$
                & Send a channel of type $A$ and continue as $B$ \\
                $A \parr B$ and $A \lolli B$
                & Receive a channel of type $A$ and continue as $B$ \\
                $\bang A$
                & Repeatedly provide a service of type $A$ \\
                $\whynot A$
                & Connect to a service of type $A$
            \end{tabular}
    \end{center}
    \caption{Interpretation of \ull propositions as session types}
    \label{tbl:propsassessiontypes}
    \end{table}
}

\ull does not include the additives $A \oplus B$ and $A \mathbin{\&} B$ of
linear logic. Although the Logic of Unity does include these connectives, we
leave them out from \ull (and \pull), because their interpretation as session
types---internal and external choice, respectively---largely coincides in many
presentations of logic-based session types. Therefore they are not particularly
insightful in our formal comparison. They can be easily accommodated in \ull,
with the possibility of choosing between binary choice (as in
e.g.~\cite{caires_session_2010,caires_linear_2016}) and $n$-ary choice (as in
e.g.~\cite{wadler_propositions_2012,caires_towards_2012}).

\paragraph{Duality.}
The duality of \ull propositions is given in Definition~\ref{def:duality}. In
\pull duality is reflected by the intended reciprocity of protocols between two
parties: when a process on one side of a channel sends, the process on the
opposite side must receive, and vice versa.

\begin{definition}\label{def:duality}
    Duality ($\dual{A}$) is given by the following set of equations:
    \begin{align*}
        \dual{\1} &:= \bot
        & \dual{(A \tensor B)} &:= \dual{A} \parr \dual{B}
        & \dual{(\bang A)} &:= \whynot \dual{A} \\
        \dual{\bot} &:= \1
        & \dual{(A \parr B)} &:= \dual{A} \tensor \dual{B}
        & \dual{(\whynot A)} &:= \bang \dual{A}
    \end{align*}
\end{definition}

\noindent It is easy to see that duality is an involution: $\dual{(\dual{A})} =
A$. As usual, we decree that $A \lolli B = \dual{A} \parr B$. From this, we can
derive the relation between $\lolli$ and $\tensor$ by means of their duals:
\begin{align*}
    \dual{(A \lolli B)} &= \dual{(\dual{A} \parr B)} = \dual{(\dual{A})} \tensor
    \dual{B} = A \tensor \dual{B} \\
    \dual{(A \tensor B)} &= \dual{A} \parr \dual{B} = A \lolli \dual{B}
\end{align*}

\paragraph{Processes.}
\pull is a type system for the $\pi$-calculus processes defined as follows:

\begin{definition}\label{def:processterms}
    Process terms are generated by the following grammar:
    \begin{align*}
        P, Q := \0 \bnf \fwd{x}{y} \bnf \nu{x}P \bnf P \| Q \bnf \send{x}{y}.P
        \bnf \recv{x}{y}.P \bnf \serv{x}{y}.P \bnf \send{x}{}.\0 \bnf
        \recv{x}{}.P
    \end{align*}
\end{definition}

\noindent Process constructs for inaction $\0$, channel restriction $\nu{x}P$,
and parallel composition $P \| Q$ have standard readings. The same applies to
constructs for output, input, and replicated input prefixes, which are denoted
$\send{x}{y}.P$, $\recv{x}{y}.P$, and $\serv{x}{y}.P$, respectively. Process
$\fwd{x}{y}$ denotes a forwarder that ``fuses'' channels $x$ and $y$. We
consider also constructs $\send{x}{}.\0$ and $\recv{x}{}.P$, which specify the
explicit closing of channels: their synchronization represents the explicit
de-allocation of linear resources. These constructs result from the non-silent
interpretation of $\1$, which, as explained in~\cite{caires_towards_2012},
leads to a Curry-Howard correspondence that is stronger than correspondences
with silent interpretations of $\1$ (such as those
in~\cite{caires_session_2010,caires_linear_2016}).

\newpage
\paragraph{Structural congruence.}
Processes can be syntactically different, but still exhibit the same behavior.
Such processes are \emph{structurally congruent}, in the sense of the following definition:

\begin{definition}\label{def:structuralcongruence}
    Structural congruence ($\equiv$) is given by the following set of equations,
    where $\equiv_\alpha$ denotes equality up to capture-avoiding
    $\alpha$-conversion, and $\fn(P)$ gives the set of free names in $P$, i.e.\
    the complement of $\bn(P)$: the names in $P$ bound by restriction $\nu{x}$
    and (replicated) input $\recv{x}{y}$ and $\serv{x}{y}$:
    \begin{align*}
        & P \| \0 \equiv P
        && P \| (Q \| R) \equiv (P \| Q) \| R \\
        & P \| Q \equiv Q \| P
        && \fwd{x}{y} \equiv \fwd{y}{x} \\
        & \nu{x}\0 \equiv \0
        && P \equiv_\alpha Q \implies P \equiv Q \\
        & \nu{x}\nu{y}P \equiv \nu{y}\nu{x}P
        && x \notin \fn(P) \implies P \| \nu{x}Q \equiv \nu{x}(P \| Q)
    \end{align*}
\end{definition}

\paragraph{Computation.}
In a Curry-Howard correspondence, computation is related to cut reduction in the
logic. Cut reduction removes cuts from an inference tree, which reduces the size
of the tree without changing the result of the inference. In the correspondence
between linear logic and the $\pi$-calculus, cut reduction is related to
communication, defined by the following reduction relation:

\begin{definition}\label{def:reductionrelation}
    Reduction of process terms ($\red$) is given by the following relation:
    \begin{align*}
        & \send{x}{y}.P \| \recv{x}{z}.Q \red P \| Q\subst{y/z}
        && Q \red Q' \implies P \| Q \red P \| Q' \\
        & \send{x}{y}.P \| \serv{x}{z}.Q \red P \| Q\subst{y/z} \| \serv{x}{z}.Q
        && P \red Q \implies \nu{y}P \red \nu{y}Q \\
        & \send{x}{}.\0 \| \recv{x}{}.Q \red Q
        && P \equiv P' \wedge P' \red Q' \wedge Q' \equiv Q \implies P \red Q \\
        & P \| \fwd{x}{y} \red P\subst{y/x}
    \end{align*}
\end{definition}

\paragraph{Typing inference.}
The inference system of \ull is a sequent calculus with sequents of the form
$\Gamma; \Delta \vdash \Lambda$ in which $\Gamma$ is a collection of
propositions that can  be indefinitely used, and $\Delta$ and $\Lambda$ collect
propositions that must be used linearly (exactly once). With respect to the
Logic of Unity, we added a left rule for $\1$ and a right rule for $\bot$, and
removed rules that allow propositions to switch sides in a sequent.

\pull's typing inference system annotates sequents with process terms and
channel names to form typing judgments of the following form:
\vspace{-7pt}
$$\Gamma; \Delta \vdash P :: \Lambda$$
In this interpretation, $\Gamma$ (resp.\ $\Delta$ and $\Lambda$), the
unrestricted (resp.\ linear) context of $P$, consists of assignments of the form
$x:A$, where $x$ is a channel and $A$ is a proposition/type. In a correct
inference, these contexts together contain exactly the free channel names of the
process term $P$. We write $\cdot$ to denote an empty context. \pull's inference
rules are given in Figure~\ref{fig:ull-inf}. Note that some rules are labeled
with an `$\ast$', which we use to distinguish a class of rules to be used in the
formal comparison in the next section.

\begin{figure}
    \input{ull.tex}
    \caption{The $\ull$ inference rules / type system}\label{fig:ull-inf}
\end{figure}

\paragraph{Cut reduction and identity expansion.}
Caires, Pfenning, and Toninho~\cite{caires_towards_2012} showed that the
validity of session type interpretations of linear logic propositions can be
demonstrated by checking that cut reductions in typing inferences do correspond
to reductions of processes, as well as by showing that the identity axiom of any
type can be expanded to a larger process term with forwarding of a smaller type.
Following this approach, \pull can be shown valid for all reductions, using
$\cut \rgt$ as well as $\cut \lft$, and expansions, using $\id \rgt$ as well as
$\id \lft$.

\paragraph{Correctness properties.}
\renewcommand{\proofname}{Proof (sketch)}
As is usual for Curry-Howard correspondences for concurrency, the cut
elimination property of linear logic means that \pull has the
\emph{soundness/subject reduction} (Theorem~\ref{thm:subjectreduction}) and
\emph{progress} (Theorem~\ref{thm:progress}) properties.

\newpage
\begin{theorem}\label{thm:subjectreduction}
    If $\Gamma; \Delta \vdash P :: \Lambda$ and $P \red Q$, then $\Gamma; \Delta
    \vdash Q :: \Lambda$.
\end{theorem}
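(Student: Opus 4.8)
The plan is to prove subject reduction by induction on the derivation of $P \red Q$, after setting up two standard auxiliary facts. The first is a \emph{subject congruence} lemma: if $\Gamma; \Delta \vdash P :: \Lambda$ and $P \equiv Q$, then $\Gamma; \Delta \vdash Q :: \Lambda$, proved by induction on the derivation of $P \equiv Q$; all cases are immediate by inversion except scope extrusion $x \notin \fn(P) \implies P \| \nu{x}Q \equiv \nu{x}(P \| Q)$, which is absorbed by permuting the relevant cut rule and relabelling, the context side conditions on the cut rules guaranteeing this is type-correct. The second is the usual family of \emph{inversion} (generation) lemmas reading off, from Figure~\ref{fig:ull-inf}, which rules could have concluded the typing of a process of a given shape. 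The observation that makes the induction run smoothly is that $\red$ never fires underneath a communication prefix: it closes only under parallel composition, restriction, and $\equiv$, so every redex sits, up to $\equiv$, directly at a cut.

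The closure cases are routine. For $\nu{y}P_1 \red \nu{y}P_1'$ with $P_1 \red P_1'$, inversion shows the last rule is a cut rule; we apply the induction hypothesis to the premise typing the reducing subterm and reassemble the derivation. The case $P_1 \| P_2 \red P_1 \| P_2'$ arises only as a subcase of this one, since a bare parallel composition is never well-typed at top level. The structural-congruence closure case follows by combining subject congruence with the induction hypothesis.

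The heart of the proof is the four base cases, each a transcription of a principal cut reduction of \ull. Using subject congruence and scope extrusion, we first bring the restriction binding the communication subject adjacent to the two interacting prefixes; inversion then forces the typing derivation of the redex to be a cut whose premises introduce dual connectives on that channel. Concretely: $\send{x}{y}.P \| \recv{x}{z}.Q \red P \| Q\subst{y/z}$ comes from a $\tensor$/$\parr$ (equivalently $\tensor$/$\lolli$) principal cut; $\send{x}{}.\0 \| \recv{x}{}.Q \red Q$ from a $\1$/$\bot$ cut; $P \| \fwd{x}{y} \red P\subst{y/x}$ from a cut against an identity axiom; and $\send{x}{y}.P \| \serv{x}{z}.Q \red P \| Q\subst{y/z} \| \serv{x}{z}.Q$ from a cut in which a $\cpy$ rule meets a server previously made unrestricted by $\cut^\bang$ or $\cut^\whynot$ --- the case in which the server persists, mirroring the contraction performed by exponential cut elimination. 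In each case we exhibit the cut-reduced derivation and check that it types the reduct with the same sequent.

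I expect the main obstacle to be this last base case together with the uniformity forced by \ull. Because \ull unifies the intuitionistic and classical presentations, each reduction step must be verified for all variants generated by the doubled rules: $\cut\rgt$ as well as $\cut\lft$, $\id\rgt$ as well as $\id\lft$, and, for the exponential step, a server introduced by either $\bang\rgt$ or $\whynot\lft$, accessed through either $\cpy\rgt$ or $\cpy\lft$, and composed by either $\cut^\bang$ or $\cut^\whynot$. Matching up contexts across all these combinations, and carrying out the scope-extrusion rearrangements needed to expose the principal cut in the first place, is where the bookkeeping concentrates, even though each individual step merely replays the corresponding cut-elimination step of linear logic.
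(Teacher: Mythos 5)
Your proposal is correct and in essence matches the paper's (very terse) proof: both reduce subject reduction to replaying the principal cut reductions of \ull --- the $\tensor$/$\parr$ (or $\lolli$), $\1$/$\bot$, identity, and exponential/\cpy cases, across the doubled \rgt/\lft variants --- with congruence and closure cases handled routinely. The only difference is organizational: you induct on the derivation of $P \red Q$ with an explicit subject-congruence lemma and inversion, whereas the paper inducts on the typing derivation of $P$ ``using cut reduction''; just note that the bare reduction axioms (e.g.\ $\send{x}{y}.P \| \recv{x}{z}.Q$) are themselves untypable, so your principal-cut analysis in fact lives in the restriction/congruence closure cases, exactly as your ``bring the restriction adjacent'' step already anticipates.
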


\begin{proof}
    By induction on the structure of the proof of $P$, using cut reduction.
\end{proof}

\begin{definition}
    For any process $P$, $\live(P)$ if and only if there are processes $Q$ and
    $R$ and channels $\tilde{n}$ such that $P \equiv \nu{\tilde{n}}(\pi.Q \|
    R)$, where $\pi \in \{\send{x}{y}, \recv{x}{y}, \send{x}{}, \recv{x}{}\}$.
\end{definition}

\begin{theorem}\label{thm:progress}
    If $\cdot; \cdot \vdash P :: \cdot$ and $\live(P)$, then there exists $Q$
    such that $P \red Q$.
\end{theorem}

\begin{proof}
    The liveness assumption tells us that $P$ is a parallel composition of a
    process $Q$ that is guarded by some non-persistent communication $\pi$ and
    some other process $R$. The fact that $P$'s proof has an empty context
    allows us to infer that $\pi.Q$ and $R$ must have been composed using a
    $\cut$ rule and that $R$ must be guarded by a prefix that is dual to $\pi$.
    Therefore, a reduction can take place.
\end{proof}
\renewcommand{\proofname}{Proof}

\section{Comparing Expressivity through United Linear Logic}%
\label{sec:comparison}

\pull is a suitable framework for a rigorous comparison of the expressivity of
session type systems based on linear logic. In this section, we compare the
class of processes typable in \pull to the classes of processes typable in a
classical and intuitionistic interpretation of linear logic. For this comparison
to be fair, the differences between these classes need to come only from typing,
so the process languages need to be the same. This means that our classical and
intuitionistic interpretations need to type process terms as given in the
previous section, with the same features as those in \pull: explicit closing, a
separate unrestricted context, and identity as forwarding.

\begin{table}[]
\newcolumntype{Y}{>{\centering\arraybackslash}X}
\centering
\begin{tabularx}{.85\textwidth}{lYYY}
 & \textbf{Explicit closing}
 & \textbf{Separate \mbox{unrestricted} context}
 & \textbf{Identity as \mbox{forwarding}} \\ \hline
\pcll~\cite{caires_linear_2016}   & No  & Yes & No  \\
\cp~\cite{wadler_propositions_2012}     & Yes & No  & Yes \\
\pcllcp & Yes & Yes & Yes
\end{tabularx}
\caption{Feature comparison of three session type interpretations of classical
linear logic}\label{tbl:pcll_cp_pcllcp}
\end{table}

\begin{figure}
    \newcommand{\sepspace}{53pt}

    \vspace{\sepspace}
    \input{pill.tex}
    \caption{The \pill type system}\label{fig:ill-inf}%

    \vspace{\sepspace}
    \input{pcllcp.tex}
    \caption{The \pcllcp type system}\label{fig:cll-inf}
\end{figure}

Our intuitionistic and classical session type interpretations of linear logic
are denoted \pill and \pcllcp, respectively. Their respective rules are given in
Figure~\ref{fig:ill-inf} and Figure~\ref{fig:cll-inf}. \pill is based on the
work by Caires, Pfenning and Toninho~\cite{caires_towards_2012}.

It is worth noticing that, because of the features we require for our type
systems, we could not directly base our classical interpretation on \pcll by
Caires, Pfenning and Toninho~\cite{caires_linear_2016} nor on Wadler's
\cp~\cite{wadler_propositions_2012}. Therefore, we have designed \pcllcp as a
combination of features from \pcll and \cp. Table~\ref{tbl:pcll_cp_pcllcp}
compares these features in \pcll, \cp and \pcllcp; the differences are merely
superficial:
\begin{itemize}
    \item Explicit closing of sessions concerns a non-silent interpretation of
        $\1$ and $\bot$ in the logic that entails a  reduction on processes
        (which, in turn, corresponds to cut reduction). In contrast, implicit
        closing is due to a silent interpretation and corresponds to
        (structural) congruences in processes.
    \item Sequents with a separate unrestricted context are of the form $P
        \vdash \Gamma; \Delta$, which can also be written as $P \vdash \Delta,
        \Gamma'$ where $\Gamma'$ contains only types of the form $!A$.
    \item The identity axiom can be interpreted as the forwarding process, which
        enables to account for polymorphism. The forwarding process, however, is
        not usually present in session $\pi$-calculi.
\end{itemize}

\paragraph{Judgments.}
Before we study the expressivity of these three systems, it is important to take
note of the difference in the forms of their typing judgments. For \pull,
\pcllcp, and \pill, respectively, they are as follows:

\vspace{-7pt}
\begin{minipage}[T]{.31\textwidth}
    $$\Gamma; \Delta \vdash P :: \Lambda$$
\end{minipage}%
\hfill%
\begin{minipage}[T]{.31\textwidth}
    $$P \vdcll \Gamma; \Delta$$
\end{minipage}%
\hfill%
\begin{minipage}[T]{.31\textwidth}
    $$\Gamma; \Delta \vdill x:A$$
\end{minipage}
\smallskip

\noindent \pcllcp has one-sided sequents, whereas \pill has two-sided sequents
(like \pull), but with exactly one channel/type pair on the right of the
turnstile. Compare, for example, \pcllcp's inference rule for \tensor and the
$\tensor \rgt$ rules for \pill and \pull (see Figure~\ref{fig:ull-inf}):

\hspace{-25pt}%
\begin{minipage}[T]{.5\textwidth}
    \begin{prooftree}
        \infAss
            {$P \vdcll \Gamma; \Delta, y:A$}
        \infAss
            {$Q \vdcll \Gamma; \Delta', x:B$}
        \infBin
            {$\nu{y}\send{x}{y}.(P \| Q) \vdcll \Gamma; \Delta, \Delta', x:A
            \tensor B$}
            {$\tensor$}
    \end{prooftree}
\end{minipage}%
\hspace{0pt}%
\begin{minipage}[T]{.5\textwidth}
    \begin{prooftree}
        \infAss
            {$\Gamma; \Delta \vdill P :: y:A$}
        \infAss
            {$\Gamma; \Delta' \vdill Q :: x:B$}
        \infBin
            {$\Gamma; \Delta, \Delta' \vdill \nu{y}\send{x}{y}.(P \| Q) ::
            x:A \tensor B$}
            {$\tensor \rgt$}
    \end{prooftree}
\end{minipage}
\smallskip

\paragraph{Locality.}
Locality is a well-known principle in concurrency research~\cite{merro_locality_2000}. The idea is that
freshly created channels are \emph{local}. Local channels are \emph{modifiable},
in the sense that they can be used for inputs. Once a channel has been
transmitted to another location, it becomes \emph{non-local}, and thus
\emph{immutable}: it can only be used for outputs---inputs are no longer allowed. This makes locality particularly
relevant for giving formal semantics to distributed programming languages; a prime example is
the \emph{join calculus}~\cite{fournet_calculus_1996}, whose theory relies on (and is deeply influenced by) the locality principle~\cite{fournet_bisimulations_2001}.

\pill guarantees locality for shared channels: a server can be defined using a
replicated input, so the channel on which this server would be provided cannot
be received earlier in the process. Consider the following example, taken
from~\cite{caires_linear_2016}:
$$\nu{x}(\recv{x}{y}.\serv{y}{z}.P \| \nu{q}\send{x}{q}.Q)$$
Let us attempt to find a typing for $P$ in this process using \pill. We can
apply the cut rule to split the parallel composition, of which the left
component is $\recv{x}{y}.\serv{y}{z}.P$. Now, there are two rules we can apply
(read bottom-up):

\hspace{-13pt}%
\begin{minipage}{.5\textwidth}
    \begin{prooftree}
        \infAss
            {$\Gamma; \Delta, y:A, x:B \vdill \serv{y}{z}.P :: w:C$}
        \infUn
            {$\Gamma; \Delta, x:A \tensor B \vdill \recv{x}{y}.\serv{y}{z}.P ::
            w:C$}
            {$\tensor \lft$}
    \end{prooftree}
\end{minipage}%
\hspace{-6pt}%
\begin{minipage}{.5\textwidth}
    \begin{prooftree}
        \infAss
            {$\Gamma; \Delta, y:A \vdill \serv{y}{z}.P :: x:B$}
        \infUn
            {$\Gamma; \Delta \vdill \recv{x}{y}.\serv{y}{z}.P :: x:A \lolli B$}
            {$\lolli \rgt$}
    \end{prooftree}
\end{minipage}
\smallskip

\noindent In both cases, the received channel $y$ ends up on the left of the
turnstile. There are no rules in \pill to define a service on a channel on the
left and there is no way to move the channel to the right. Hence, we cannot find
a typing for $P$ in \pill. In contrast, this process can be typed in \pcllcp as
follows, given $P=P'\subst{x/u}$:

\vspace{-25pt}
\begin{prooftree}
    \infAss
        {$P' \vdcll \Gamma, u:B; z:A$}
    \infUn
        {$\serv{y}{z}.P' \vdcll \Gamma, u:B; y:\bang A$}
        {$\bang$}
    \infUn
        {$\serv{y}{z}.P \vdcll \Gamma; y:\bang A, x:\whynot B$}
        {$\whynot$}
    \infUn
        {$\recv{x}{y}.\serv{y}{z}.P \vdcll \Gamma; x:(\bang A) \parr (\whynot
        B)$}
        {$\parr$}
\end{prooftree}

\paragraph{Symmetry.}
The rely-guarantee distinction of \pill is what makes it enforce locality for
shared channels. The distinction is visible in \pill's distinction between left
and right in its judgments. Despite the two-sidedness of its sequents, \pull can
also type non-local processes. This is due to the full symmetry in the inference
rules: anything that can be done on the left of the turnstile can be done on the
right. As we will show formally in the rest of this section, this symmetry
corresponds to the single-sidedness of \pcllcp: \pull and \pcllcp can type
exactly the same processes. We will also show formally that \pull, and thus
\pcllcp, can type more processes than \pill, because of the restriction of the
right side of the turnstile to exactly one channel/type pair, making \pill an
asymmetrical typing system.

\newpage
\paragraph{Formal results.}
\renewcommand{\proofname}{Proof (sketch)}
Our formal results rely on the sets of processes typable in the three typing
systems, given in Definition~\ref{def:typablesets}.

\begin{definition}\label{def:typablesets}
    Let $\mathbb{P}$ denote the set of all processes induced by
    Definition~\ref{def:processterms}. Then
    \begin{align*}
        \U &= \{P \in \mathbb{P} \mid \exists \Gamma, \Delta, \Lambda \text{
        such that } \Gamma; \Delta \vdash P :: \Lambda\}, \\
        \C &= \{P \in \mathbb{P} \mid \exists \Gamma, \Delta \text{ such that }
        P \vdcll \Gamma; \Delta\}, \\
        \I &= \{P \in \mathbb{P} \mid \exists \Gamma, \Delta, x \in \fn(P), A
        \text{ such that } \Gamma; \Delta \vdill P :: x:A\}.
    \end{align*}
\end{definition}

Our first observation is that all \pull-typable processes are \pcllcp-typable.
Moreover, the reverse is true as well. For the latter fact we need to convert a
typing inference with one-sided sequents to a two-sided system, which means that
we need the ability to control the side of the turnstile specific propositions
end up on. This is taken care of by Lemma~\ref{lem:ulltotheleft}, which is an
extension of~\cite[Prop.5.1,~p.19]{caires_linear_2016} to include exponential
types (\bang and \whynot). The main result, Theorem~\ref{thm:cllandull}, is that
\pcllcp and \pull type exactly the same processes.

\begin{lemma}\label{lem:ulltotheleft}
    For any typing contexts $\Gamma, \Delta, \Lambda, \Pi$ and process $P$ such
    that $\Gamma; \Delta \vdash P :: \Lambda, \Pi$, we have $\Gamma; \Delta,
    \dual{\Pi} \vdash P :: \Lambda$.
\end{lemma}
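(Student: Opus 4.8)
The plan is to proceed by induction on the derivation of $\Gamma; \Delta \vdash P :: \Lambda, \Pi$, where $\Pi$ is an arbitrary sub-context of the right-hand side that we wish to move to the left. The key enabling fact is the symmetry of the \ull inference rules noted in the paper: for every ``$\rgt$'' rule acting on a type $A$ on the right, there is a matching ``$\lft$'' rule acting on $\dual A$ on the left, and the associated process term and channel name are identical. So the induction is essentially a bookkeeping argument: at each step we ask which channels of $\Pi$ (if any) are touched by the last rule, and we replay the derivation using the dual rule on the left.

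First I would set up the induction so that $\Pi$ ranges over arbitrary subsets of the right context, which makes the statement strong enough to be closed under the premises of binary rules (each premise gets its own share of $\Pi$ moved left). Then I would go rule by rule. For the axioms $\intrule{\id\rgt}$ and $\intrule{\1\rgt}$ and $\bot\rgt$ it is immediate, appealing to $\id\lft$, $\intrule{\1\lft}$/$\bot\lft$ as appropriate (and noting $\dual A$ for the forwarder, $\dual{\1} = \bot$, etc.). For the cut rules $\intrule{\cut\rgt}$, $\cut\lft$, $\cut^\whynot$, $\intrule\cut^\bang$, and for $\cpy\rgt$/$\intrule{\cpy\lft}$ and the exponential rules $\intrule{\bang\lft}$, $\whynot\rgt$, the context $\Pi$ is simply passed down (possibly split between premises), the inductive hypothesis is applied, and the same rule is reapplied; the unrestricted context $\Gamma$ and the cut channels are untouched, so nothing subtle happens. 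The interesting cases are the multiplicatives: if the principal formula of the last rule is one of the types in $\Pi$ being moved, we must switch from, say, $\intrule{\tensor\rgt}$ to $\parr\lft$ (since $\dual{(A\tensor B)} = \dual A \parr \dual B$), from $\parr\rgt$ to $\intrule{\tensor\lft}$, and from $\intrule{\lolli\rgt}$ to $\intrule{\lolli\lft}$ (using $\dual{(A\lolli B)} = A \tensor \dual B$ together with the relation $A\lolli B = \dual A \parr B$). In each such case one checks that the premises of the target rule are exactly what the inductive hypothesis delivers, after moving the relevant component(s) of the principal formula's sub-derivations; crucially the process term annotating the conclusion is the same on both sides, which is precisely the point of the symmetry.

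The main obstacle I expect is purely organizational rather than mathematical: keeping the induction hypothesis sufficiently general (arbitrary $\Pi$, arbitrary split across binary premises) and carefully matching, for each of the roughly dozen rules, the premise shapes of the ``mirrored'' rule — in particular making sure the decomposition of a principal $A\tensor B$ (or $A\parr B$, or $A\lolli B$) in $\Pi$ lines up with how the dual connective's left/right rule decomposes $\dual A$ and $\dual B$, and that the linear contexts $\Delta,\Delta'$ are partitioned consistently. The exponentials require a small extra check (this is exactly where the statement extends \cite[Prop.~5.1]{caires_linear_2016}): when $\whynot A \in \Pi$ is moved left it becomes $\bang\dual A$ by duality, and one verifies that the $\whynot\rgt$ premise $\Gamma, u:\dual A; \Delta \vdash P :: \Lambda$ is exactly the premise needed for... well, for the appropriate exponential left rule, modulo the promotion-style side condition on contexts, which is already guaranteed by the shape of the $\whynot\rgt$ premise. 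Once these cases are dispatched, iterating the single-formula version gives the full $\Pi$ at once, completing the proof.
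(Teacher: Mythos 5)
Your overall strategy is the same as the paper's: induction on the typing derivation, and whenever the last rule introduced a member of $\Pi$ on the right, replay it as the matching left rule for the dual type, pushing the remaining members of $\Pi$ into the premises via the induction hypothesis; the invariant is that the annotated process term never changes. Most of your case table is right, but one case, as you state it, would fail: you map $\lolli\rgt$ to $\lolli\lft$. Those two rules type different process constructs --- $\lolli\rgt$ concludes the input $\recv{x}{y}.P$, whereas $\lolli\lft$ concludes the output $\nu{y}\send{x}{y}.(P \| Q)$ --- so replaying with $\lolli\lft$ cannot reproduce the same process, which is exactly the point of the symmetry you invoke. The correct mirror is $\tensor\lft$: since $\dual{(A \lolli B)} = A \tensor \dual{B}$, apply the induction hypothesis to the premise $\Gamma; \Delta, y:A \vdash P :: \Lambda, x:B$ to move $x:B$ (and any part of $\Pi$ lying in $\Lambda$) to the left, obtaining $\Gamma; \Delta, y:A, x:\dual{B} \vdash P :: \Lambda$, and then $\tensor\lft$ instantiated at $A \tensor \dual{B}$ yields $\Gamma; \Delta, x:A \tensor \dual{B} \vdash \recv{x}{y}.P :: \Lambda$, as required. (The $\lolli$ rules have no mirror among themselves: $\lolli$ abbreviates $\dual{A} \parr B$, so its right rule mirrors an instance of $\tensor\lft$ and its left rule an instance of $\tensor\rgt$.)

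Two smaller points. First, you never treat the case where the last rule is $\bang\rgt$ and its conclusion $x:\bang A$ must be moved: its mirror is $\whynot\lft$ --- the induction hypothesis applied to the premise $\Gamma; \cdot \vdash P :: y:A$ gives $\Gamma; y:\dual{A} \vdash P :: \cdot$, and then $\whynot\lft$ gives $\Gamma; x:\whynot \dual{A} \vdash \serv{x}{y}.P :: \cdot$. Since the exponentials are precisely what this lemma adds over the proposition it extends, this case (together with $\whynot\rgt$ mirroring $\bang\lft$, which you do sketch) should be explicit; note also that there is no promotion-style condition to check on $\bang\lft$ --- the empty-linear-context requirements sit on $\bang\rgt$ and $\whynot\lft$ and are supplied by the shapes just shown. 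Second, $\bot\rgt$ is not an axiom: it has a premise and needs the induction hypothesis like any unary rule, and its mirror is $\1\lft$, while $\1\rgt$ mirrors $\bot\lft$. These are repairable slips; with the $\lolli\rgt$ case corrected as above, your argument goes through along the same lines as the paper's proof.
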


\begin{proof}
    By induction on the structure of the type inference tree. If the last
    inferred proposition is to be moved to the left, after type inversion, the
    appropriate left rule can be used where a right rule was used. Other
    propositions can be moved using the induction hypothesis.
\end{proof}

\begin{theorem}\label{thm:cllandull}
    $\U = \C$.
\end{theorem}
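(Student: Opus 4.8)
The plan is to establish the two inclusions $\U \subseteq \C$ and $\C \subseteq \U$ separately, in each case by induction on the structure of the given typing derivation, carried along a \emph{dualizing translation} of judgments.

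For $\U \subseteq \C$ I would translate a \pull judgment $\Gamma; \Delta \vdash P :: \Lambda$ into the \pcllcp judgment $P \vdcll \dual{\Gamma}; \dual{\Delta}, \Lambda$, i.e.\ dualize the unrestricted context and the left linear context and then collect everything as a single one-sided linear context. The induction then matches each \pull rule with a \pcllcp rule: a \emph{right} rule for a connective goes to the \pcllcp rule for that same connective; a \emph{left} rule for a connective goes to the \pcllcp rule for its dual connective (e.g.\ $\tensor \lft$ and $\lolli \lft$ map to uses of \pcllcp's $\tensor$, while $\tensor \lft$'s dual partner $\parr \lft$ and the unit rules go to $\parr$, $\bot$, $\1$ accordingly), and the two cut rules $\cut \rgt$ and $\cut \lft$ collapse onto \pcllcp's single $\cut$, the two identity rules onto $\id$, and $\cut^\whynot$ together with $\cut^\bang$ onto $\cut^?$. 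The only subtlety is the exponentials: because $\Gamma$ is dualized, \pull's $\whynot \rgt$ and $\bang \lft$ (which mention $\whynot \dual{A}$ and $\dual{\bang A}$) fall exactly onto \pcllcp's $\whynot$ and $\bang$ once the unrestricted assignment $u{:}A$ in \pull is read as $u{:}\dual{A}$ in \pcllcp, and the $\cpy$ rules line up for the same reason.

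For $\C \subseteq \U$ I would aim instead for the translation that sends $P \vdcll \Gamma; \Delta$ to $\dual{\Gamma}; \cdot \vdash P :: \Delta$, keeping the whole linear context on the right of the turnstile so as to mirror \pcllcp as closely as possible. Each \pcllcp connective rule then maps directly to the corresponding \pull \emph{right} rule. The obstruction appears at the rules that build a cut or an identity: \pcllcp's $\cut$ composes premises $P \vdcll \Gamma; \Delta, x{:}A$ and $Q \vdcll \Gamma; \Delta', x{:}\dual{A}$, whereas \pull's $\cut \rgt$ wants the cut channel on the right of one premise and on the left of the other, so after applying the induction hypothesis I would invoke Lemma~\ref{lem:ulltotheleft} to move $x{:}\dual{A}$ from the right of $Q$'s judgment to the left, where it becomes $x{:}A$; the same relocation lets \pcllcp's $\cut^?$ be simulated by \pull's $\cut^\whynot$. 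The $\id$ case is the most delicate: \pcllcp's axiom types $\fwd{x}{y}$ with both $x{:}A$ and $y{:}\dual{A}$ on its single side, which the translation demands we realize as $\dual{\Gamma}; \cdot \vdash \fwd{x}{y} :: x{:}A, y{:}\dual{A}$, and \pull has no primitive rule placing both forwarded endpoints on the right. This judgment is nonetheless derivable from $\id \lft$ by shuttling each endpoint across the turnstile, using the left-to-right counterpart of Lemma~\ref{lem:ulltotheleft}; that counterpart is immediate from the full left/right symmetry of \pull's rule set noted above.

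I expect the main obstacle to be exactly this bookkeeping: checking that the two dualizations (of $\Gamma$ and of the left linear context) together with the side-relocations licensed by Lemma~\ref{lem:ulltotheleft} compose coherently through every rule, with the exponential rules and the identity rule being the places where a naive ``leave every formula where it is'' translation breaks and a deliberate application of the relocation lemma is forced.
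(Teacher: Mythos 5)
Your direction $\U \subseteq \C$ is essentially the paper's argument (rule-by-rule induction, dualizing $\Gamma$ and the left linear context), modulo a small slip in your examples: under that translation $\tensor \lft$, being a \emph{receive} on a channel of type $\dual{(A \tensor B)} = \dual{A} \parr \dual{B}$, maps to \pcllcp's $\parr$ rule, while $\parr \lft$ (a send) maps to $\tensor$; your stated general principle is right but the parenthetical instances are swapped. The genuine gap is in direction $\C \subseteq \U$. Your invariant keeps the entire linear context on the right, $P \vdcll \Gamma; \Delta \;\Rightarrow\; \dual{\Gamma}; \cdot \vdash P :: \Delta$, and you patch the $\id$ case by appealing to a ``left-to-right counterpart'' of Lemma~\ref{lem:ulltotheleft}, claimed to be immediate from the full symmetry of \pull. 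That counterpart is false, and precisely because \pull's symmetry breaks at the identity: there is an axiom $\id \lft$ typing $\fwd{x}{y}$ with both endpoints on the left, but no axiom (and no derived judgment) of the form $\Gamma; \cdot \vdash \fwd{x}{y} :: x{:}A, y{:}\dual{A}$. Indeed, the only rules whose conclusion can carry the process term $\fwd{x}{y}$ are $\id \rgt$, $\id \lft$ (both with a nonempty left linear context) and the substitution-shaped rules $\bang \lft$ / $\whynot \rgt$, which respectively force a $\bang$ assignment on the left or a $\whynot$ type for the relocated name; so already $\cdot; \cdot \vdash \fwd{x}{y} :: x{:}\bot, y{:}\1$ is underivable, refuting the converse relocation lemma. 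Since the theorem requires typing the \emph{same} process term, you also cannot dodge this by replacing the forwarder with an expanded copycat.

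The repair is to weaken the invariant, which is exactly what the paper does: prove by induction that for $P \vdcll \Gamma; \Delta$ there \emph{exists} a split $\Delta = \Delta_L, \Delta_R$ with $\dual{\Gamma}; \dual{(\Delta_L)} \vdash P :: \Delta_R$. Then the \pcllcp $\id$ axiom lands on \pull's $\id \rgt$ (one endpoint on each side), and in the remaining cases the induction hypotheses are aligned using only the right-to-left direction of Lemma~\ref{lem:ulltotheleft}---which is the direction actually available---choosing, where the premises disagree, a \emph{left} rule of \pull (e.g.\ $\parr \lft$ or $\lolli \lft$ in the $\tensor$ case, $\cut \lft$ vs.\ $\cut \rgt$, etc.) so that the principal formula may end up in $\Delta_L$ rather than $\Delta_R$. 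Your treatment of $\cut$ and $\cut^\whynot$ via Lemma~\ref{lem:ulltotheleft} is fine and survives unchanged under this weaker invariant; only the insistence that every formula stay on the right, and the unsound converse lemma invoked to support it at $\id$, need to go.
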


\begin{proof}
    There are two things to prove:
    \begin{enumerate*}[label=(\roman*)]
        \item $\U \subseteq \C$, and
        \item $\C \subseteq \U$.
    \end{enumerate*}
    (i) can be proven by induction on the structure of the typing inferences.
    The idea is that for every rule of \pull there is an analogous rule in
    \pcllcp. As for (ii), for any $P$ such that $P \vdcll \Gamma; \Delta$, it
    suffices to show that there are $\Delta_L, \Delta_R = \Delta$ such that
    $\dual{\Gamma}, \dual{(\Delta_L)} \vdash P :: \Delta_R$. This can be done by
    induction on the structure of the typing inference of $P$. After type
    inversion, the induction hypothesis can be used, which moves channel/type
    pairs to either side of the turnstile. This results in many subcases, each
    of which can be solved with appropriate applications of \pull rule analogous
    to the last used \pcllcp rule, using Lemma~\ref{lem:ulltotheleft} in some
    cases.
\end{proof}

The comparison between \pull and \pill can be done similarly. However, it is
more interesting to examine the set of inference rules. If we restrict all
sequents in a \pull typing inference to have exactly one channel/type on the
right of the turnstile, we end up with a subset of usable inference rules: those
marked with an `$\ast$' in Figure~\ref{fig:ull-inf}. Upon further examination,
we see that this set of rules coincides exactly with the set of inference rules
for \pill. The consequence is that \pull can type all \pill-typable processes.
Finally, any \pull/\pcllcp-typable process violating the locality principle
suffices to show the second main result, Theorem~\ref{thm:illandull}: \pill is
less expressive than \pull. An important corollary of
Theorem~\ref{thm:cllandull} and Theorem~\ref{thm:illandull} is that \pill is
less expressive than \pcllcp, confirming the observation by Caires, Pfenning, and
Toninho~\cite{caires_linear_2016}.

\begin{definition}[\rdegree]
    The size of the right side of a \pull sequent is the sequent's
    \emph{\rdegree}. Given contexts $\Gamma, \Delta, \Lambda$ and process $P$,
    the \ull sequent $\Gamma; \Delta \vdash P :: \Lambda$ has \rdegree
    $|\Lambda|$.
\end{definition}

\newpage
\begin{theorem}\label{thm:illandull}
    $\I \subset \U$.
\end{theorem}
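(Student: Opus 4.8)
The plan is to establish the two halves separately: the inclusion $\I \subseteq \U$, and its strictness $\I \neq \U$.

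\textbf{Inclusion.} I would show that the inference system of \pill (Figure~\ref{fig:ill-inf}) is, up to renaming of rules and turnstiles, exactly the subsystem of \pull obtained by keeping only the rules marked with `$\ast$' in Figure~\ref{fig:ull-inf} and forcing every sequent to have \rdegree $1$. Concretely, each \pill judgment $\Gamma; \Delta \vdill P :: x:A$ is read as the \pull judgment $\Gamma; \Delta \vdash P :: x:A$, and each \pill rule is matched with its $\rgt$/$\lft$-annotated \pull counterpart: $\id \mapsto \intrule{\id\rgt}$, $\cpy \mapsto \intrule{\cpy\lft}$, $\cut \mapsto \intrule{\cut\rgt}$, $\cut^! \mapsto \intrule{\cut^\bang}$, $\1\lft \mapsto \intrule{\1\lft}$, $\1\rgt \mapsto \intrule{\1\rgt}$, $\tensor\lft \mapsto \intrule{\tensor\lft}$, $\tensor\rgt \mapsto \intrule{\tensor\rgt}$, $\lolli\lft \mapsto \intrule{\lolli\lft}$, $\lolli\rgt \mapsto \intrule{\lolli\rgt}$, $\bang\lft \mapsto \intrule{\bang\lft}$, $\bang\rgt \mapsto \intrule{\bang\rgt}$. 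One then checks rule by rule that, once all sequents are required to have \rdegree $1$, the shapes of premises and conclusions of each $\ast$-marked \pull rule coincide with those of the corresponding \pill rule (e.g.\ in $\intrule{\cut\rgt}$ the \rdegree-$1$ constraint forces $\Lambda = \cdot$ in the first premise and $\Lambda' = z:C$ in the second, yielding precisely \pill's $\cut$). It follows that any \pill typing derivation is verbatim a \pull typing derivation, so every $P \in \I$ lies in $\U$. This step is mechanical bookkeeping.

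\textbf{Strictness.} I would exhibit a process that is \pull-typable but not \pill-typable, namely one that provides a replicated service on a received channel:
$$R := \recv{x}{y}.\serv{y}{z}.\send{x}{}.\0 \quad\text{— more precisely take }\; R := \recv{x}{y}.\serv{y}{z}.\send{z}{}.\0 .$$
For $R \in \U$: by Theorem~\ref{thm:cllandull} it suffices to show $R \in \C$, and $R \vdcll \cdot; x:(\bang\1)\parr(\whynot\1)$ is derivable in \pcllcp by applying, bottom-up, $\parr$ (exposing $\serv{y}{z}.\send{z}{}.\0 \vdcll \cdot; y:\bang\1, x:\whynot\1$), then $\whynot$, then $\bang$ on the axiom $\send{z}{}.\0 \vdcll u:\1; z:\1$ (with $u:\1$ sitting unused in the unrestricted context). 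For $R \notin \I$: suppose toward a contradiction that $\Gamma; \Delta \vdill R :: w:C$. Inverting the derivation, the outermost prefix $\recv{x}{y}$ can only be introduced by $\tensor\lft$ or $\lolli\rgt$ (modulo finitely many $\bang\lft$ steps, which merely rename names and shuffle names between the linear and unrestricted parts of the \emph{left} context); in either case the received name $y$ occurs on the left of the turnstile in the premise typing $\serv{y}{z}.\send{z}{}.\0$. But the only \pill rule that introduces a replicated input on $y$ is $\bang\rgt$ (again modulo $\bang\lft$ renamings), and $\bang\rgt$ forces the provided channel to be the unique channel/type pair on the \emph{right} of the turnstile. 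Since no \pill rule ever moves a pair from the left of the turnstile to the right — every \pill sequent has \rdegree $1$ — the name $y$, once on the left, can never become the right-hand channel, so no such derivation exists. Hence $R \in \U \setminus \I$, which together with the inclusion gives $\I \subset \U$.

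\textbf{Main obstacle.} The delicate point is the argument that $R \notin \I$: one must argue robustly that, across the possible detours through $\bang\lft$ (the one \pill rule whose conclusion process is not syntactically forced by its premise and which relocates names within the left context), a received channel can never reach the right of the turnstile, which is exactly what providing a server on it would require. I would make this precise by an induction on the \pill derivation carrying the invariant that the received channel $y$ appears on the left in every sequent of the sub-derivation typing $\serv{y}{z}.\send{z}{}.\0$, which then directly contradicts the shape of $\bang\rgt$. The remaining ingredients — the \pcllcp derivation witnessing $R \in \C$, and the rule-by-rule match for $\I \subseteq \U$ — are routine.
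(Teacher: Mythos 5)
Your proof is correct, and while it uses the same skeleton as the paper (a rule-by-rule correspondence for the inclusion, and a locality-violating process $\recv{x}{y}.\serv{y}{z}.P$ for strictness), the way you discharge each half differs from the paper's argument in ways worth noting. The paper proves the stronger characterization $\I = \U^\ast$, where $\U^\ast$ is the set of processes with \pull derivations consisting only of sequents of \rdegree 1: it shows by induction that such derivations use only the $\ast$-marked rules, identifies these with the \pill rules, and then argues that every \pull typing of the counterexample necessarily contains a sequent of \rdegree different from 1, so the counterexample is in $\U \setminus \U^\ast$. You bypass $\U^\ast$ entirely: for the inclusion you only need the easy embedding direction (\pill derivations read verbatim as \rdegree-1 \pull derivations built from $\ast$-marked rules), and for strictness you argue non-typability directly inside \pill, by inversion on a hypothetical derivation of $R = \recv{x}{y}.\serv{y}{z}.\send{z}{}.\0$ with the invariant that the received name, once placed on the left by $\tensor\lft$ or $\lolli\rgt$, can never become the unique right-hand channel that $\bang\rgt$ requires---correctly accounting for the $\bang\lft$ detours, which only shuffle names within the left/unrestricted part. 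Membership of $R$ in $\U$ you obtain via an explicit \pcllcp derivation of $R \vdcll \cdot; x:(\bang\1)\parr(\whynot\1)$ together with Theorem~\ref{thm:cllandull}, which is legitimate since that theorem is established first; the paper instead reasons about \pull typings of the counterexample directly. The trade-off: the paper's route yields the additional structural result $\I = \U^\ast$ (making precise the slogan that \pill is \pull restricted to \rdegree 1), whereas your route is more economical for the theorem as stated and makes the locality obstruction explicit at the level of \pill derivations rather than as a degree count on \pull derivations.
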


\begin{proof}
    Let
    $$\U^\ast := \{P \in \U \mid \exists \Gamma, \Delta, A \text{ s.t.\ }
        \Gamma; \Delta \vdash P :: x:A \text{ with a proof tree containing only
    sequents of \rdegree 1}\}.$$
    By induction on the structure of the typing inference, it can be shown that
    all processes in $\U^\ast$ have typing inferences using only those rules in
    Figure~\ref{fig:ull-inf} marked with an $\ast$. It follows by contradiction
    that the last applied rule cannot be without an $\ast$, and the usage of
    $\ast$-marked rules in the rest of the inference follows from the induction
    hypothesis. The $\ast$-marked rules with \rdegree 1 coincide exactly with
    the typing inference rules of \pill, and so it follows that $\I = \U^\ast$.
    Clearly, $\U^\ast$ is a subset of $\U$. Now, given $u:B; \cdot \vdash P ::
    z:A$, there are several ways to type the process $\recv{x}{y}.\serv{y}{z}.P$
    in \pull, but all of them use sequents of \rdegree different from 1, so it
    is not in $\U^\ast$. Hence, $\U^\ast \subset \U$, confirming the thesis.
\end{proof}
\renewcommand{\proofname}{Proof}

\section{Conclusion}\label{sec:conclusion}

Using Girard's Logic of Unity~\cite{girard_unity_1993} as a basis, we have
developed United Linear Logic as a means to formally compare the session type
systems derived from concurrent interpretations of classical and intuitionistic
linear logic. Much as Logic of Unity is a logic that encompasses classical,
intuitionistic and linear logic, the session type system interpretation of \ull
(dubbed \pull), can type all \pcllcp- and \pill-typable processes. This allows
us to compare type systems based on different linear logics.

In \pill, judgments distinguish between several channels whose behavior is being
relied on (on the left of the judgment) and a single behavior provided along a
designated channel (on the right). To retain this rely-guarantee reading, \pull
uses two-sided sequents in its typing inferences. However, \pull does not
distinguish between the sides of its sequents; it is fully symmetrical. This
symmetry allows it to mimic the single-sidedness of \pcllcp's sequents: placing
a mirror besides \pcllcp's inference rules reveals the inference rules of \pull.
Similarly, \pill can be recovered from \pull by restricting the right side of
its sequents to exactly one channel. Not all inference rules remain usable,
resulting in an asymmetrical system that coincides exactly with \pill.

Our results formally corroborate the observation by Caires, Pfenning and
Toninho: the difference between session type systems based on classical and
intuitionistic linear logic is in the enforcement of
locality~\cite{caires_linear_2016}. \pcllcp is able to type non-local processes,
because it does not distinguish between linear channels, received or not,
because of its single-sided sequents. Similarly, \pull can type non-local
processes because of its symmetry. Restricting \pull into \pill removes exactly
those rules needed to violate the locality principle. This reveals that \pill
respects locality because of its asymmetry.


\paragraph*{Acknowledgements.}
We are grateful to Joseph Paulus for initial discussions.
We would also like to thank the anonymous reviewers for their suggestions, which were helpful to improve the presentation.

\label{sec:bib}
\bibliographystyle{eptcs}
\bibliography{refs}

%
%
%
%
%
%

\end{document}